\newtheorem{theorem}{Theorem}
\algnewcommand\algorithmicforeach{\textbf{for each}}
\def\BibTeX{{\rm B\kern-.05em{\sc i\kern-.025em b}\kern-.08em
    T\kern-.1667em\lower.7ex\hbox{E}\kern-.125emX}}
\begin{document}

\title{Generator Cost Coefficients Inference Attack via Exploitation of Locational Marginal Prices in Smart Grid\\}

	\author{Junfei Wang, {\em {Member}},~IEEE,
	Pirathayini Srikantha, {\em {Member}},~IEEE
	\thanks{J. Wang and P. Srikantha are with the Department of Electrical Engineering and Computer Science, York University, Toronto, ON, Canada.; E-mails: junfeiw@yorku.ca and psrikan@yorku.ca. }% <-this % stops a space
}

\maketitle

\begin{abstract}
Real-time price signals and power generation levels (disaggregated or aggregated) are commonly made available to the public by Independent System Operators (ISOs) to promote efficiency and transparency. However, they may inadvertently reveal crucial private information about the power grid, such as the cost functions of generators. Adversaries can exploit these vulnerabilities for strategic bidding, potentially leading to financial losses for power market participants and consumers. In this paper, we prove the existence of a closed-form solution for recovering coefficients in cost functions when LMPs and disaggregated power generation data are available. Additionally, we establish the convergence conditions for inference the quadratic coefficients of cost functions when LMPs and aggregated generation data are given. Our theoretical analysis provides the conditions under which the algorithm is guaranteed to converge, and our experiments demonstrate the efficacy of this method on IEEE benchmark systems, including 14-bus and 30-bus and 118-bus systems.
\end{abstract}

\begin{IEEEkeywords}
Cost Coefficients Inference, Optimal Power Flow, Locational Marginal Prices
\end{IEEEkeywords}

\section{Introduction}\label{sec:intro}
To promote transparency, monitor emissions, and help consumers and market participants make more informed decisions, Independent System Operators (ISOs) and governmental entities such as the Environmental Protection Agency (EPA) may publish open market data, including prices and power generation levels. For example, Independent System Operator New England (ISO-NE) \cite{isone} provides a real-time Locational Marginal Price (LMP) map and aggregates power generation data by resource type every five minutes. LMPs, which represent real-time price signals at different locations in the market, are typically derived as the Lagrange multipliers of the AC Optimal Power Flow (ACOPF) problem, which is solved at fixed intervals (e.g. every five minutes \cite{Cain2012}). As outputs of OPF, both power generation levels and LMPs are highly valuable, and research in ACOPF often incorporates them to facilitate fast decision-making processes \cite{Liu2022, Wang2022}. Although this information is not inherently vulnerable, adversaries may exploit the dataset to infer critical private details about the power grid, such as generators' cost functions.

There is growing concern recently regarding cybersecurity risks associated with leveraging LMPs for attacks on power system operations. Research on LMP-related cybersecurity has shown that publicly available LMPs can be used to conduct stealthy False Data Injection Attacks (FDIAs) against state estimation algorithms \cite{Asghari2023}, and that FDIAs can also reveal changes in LMPs \cite{Zhang2020} in the power grids. In \cite{Kekatos2014}, historical spatiotemporal price data is exploited for power grid topology tracking under both noisy and noiseless conditions. Additionally, LMPs are used to infer electricity market information such as transmission capacity, loss parameters and active constraints in \cite{Birge2017}. Paper \cite{Gu2022} proposes an online approach for recovering the complete grid topology from timely LMP data and incomplete partial topology information. Recovering cost functions of generators as attack mechanism has attracted growing attention in the community of cybersecurity in power system \cite{Donti2018, Chen2019}. Cost functions of power generators are crucial in OPF and power bidding. If the coefficients of these functions are exposed to adversaries or competitors, they may engage in strategic bidding, potentially causing significant financial losses for other generators and consumers \cite{David2000}. Reference \cite{Donti2018} assumes power demand and LMP at each bus and generation level at each generator are available, and proposes a machine learning-based algorithm to recover both admittance matrix and cost functions. However, experiment shows that no unique solution is guaranteed, and initial guesses as well as weights in the loss function may result in incorrect solutions. In \cite{Chen2019}, the cost function is assumed to be piece-wise linear and  discovered via a modified random search algorithm considering historical market clearing prices and power generations levels. Nevertheless, the uniqueness of the solution in this method may not be guaranteed.

In this paper, we show that adversaries can accurately recover the coefficients of each generator's cost function in power grids by combining two key pieces of public information—power generation levels and LMPs—with a small amount of data from the targeted bus. This work proves the existence of closed-form solutions of both quadratic and linear coefficients in the scenario where both power generation levels and LMPs at all nodes are available. Furthermore, when LMPs and only aggregated power generation data are given, the proposed method theoretically establishes the conditions under which the algorithm can converge to a unique solution of quadratic coefficients. Experimental results show that the cost function recovery attack method converges to the ground truth values on the IEEE 14-bus, 30-bus and 118-bus systems when the specified conditions are met.

The remainder of this paper is organized as follows. Sec~\ref{sec:form} mathematically formulates the ACOPF problem and the LMPs, and then introduces two scenarios considered in the paper with different threat models (i.e. assumptions). Then in Sec~\ref{sec:method}, the attack mechanisms in these scenarios are proposed. Subsequently, the experimental settings and the efficacy of the proposed algorithms is demonstrated in Sec~\ref{sec:ex}. Finally, we summarize the work in this paper and discusse the future extensions of this research in Sec~\ref{sec:conc}.  
\vspace{5pt}
\section{Problem Formulation}\label{sec:form}
In this section, we firstly formulate the ACOPF problem, and then demonstrate the relationship between LMPs, Lagrangian of ACOPF and the Karush-Kuhn-Tucker (KKT) conditions. Finally, the two scenarios with different threat models will be introduced.

\subsection{Formulation of ACOPF}
In power flow analysis, an electrical grid is composed of buses (nodes) and electrical lines (branches). Each bus is associated with four essential quantities: active power injection, denoted as $P_{G_{i}}-P_{D_{i}}$, the difference between active power generation and demand; reactive power generation $Q_{G_{i}}-Q_{D_{i}}$, the difference between reactive power generation and demand; voltage magnitude ($|V_{i}|$), typically maintained within operational limits (e.g., 0.95-1.05 p.u.) to ensure safe operations in the grid; phase angle ($\phi_{i}$), the relative phase position w.r.t the reference bus. The set of all buses in a power grid is denoted as $\mathcal{N}$, and generator bus set is represented by $\mathcal{G}$, where $\mathcal{G} \in \mathcal{N}$. Electrical lines are represented by Nodal Admittance Matrix $Y$, also known as Y-bus matrix, which is a symmetric complex matrix. 

For each generator $g\in \mathcal{G}$, the function $C_{g}(\cdot)$ is the cost associated with power generation, which is typically a convex quadratic function of the active power generation $P_{G_g}$, defined in Eq.~\ref{eq:cost}. 

\begin{equation}\label{eq:cost}
	C_{i}(P_{G_{i}})=a_{i}{P_{G_{i}}}^{2}+b_{i}{P_{G_{i}}}+c_i
\end{equation}

In the cost function, the coefficients $a_i$ and $b_i$ determine the how quickly the marginal generation cost increases at generator $i$. These coefficients are the the most critical parameters used in electricity market bidding, as they directly impact the generators' offer prices. Meanwhile, $c_i$ is the fixed cost associated with generator $i$, incurred even no electricity is generated. 

A typical ACOPF problem can be formulated in $\mathcal{P}_{ACOPF}$. It aims to minimize the total generation cost in the power grid, while satisfying all physical constraints [C1]-[C6]. For each generator, constraints [C1] and [C2] impose capacity limits on active and reactive power dispatch, respectively. The power flow equation in the Bus Injection Model \cite{Low2014}, defined in [C3], ensures the power balance at each bus. Constraints [C4] and [C5] establish upper and lower bounds for voltage magnitudes and phase angles. Finally, the maximum allowable power flow for each electrical line is defined by [C6].

{\footnotesize
	\begin{align*}
		\mathcal{P}_{ACOPF}:  & \min_{P_{G_{i}}} \sum_{i\in\mathcal{G}}C_{i}(P_{G_{i}}) & \\
		\text{s.t.} \  \forall  \ i \in \mathcal{G}: \ &P_{G_{i}}^{min}\leq P_{G_{i}} \leq P_{G_{i}}^{max} & \textbf{[C1]} \\
		&Q_{G_{i}}^{min}\leq Q_{G_{i}} \leq Q_{G_{i}}^{max} & \textbf{[C2]} \\
		\text{s.t.} \  \forall  \ i \in \mathcal{N}: \ & P_{G_{i}}-P_{D_{i}}+j(Q_{G_{i}}-Q_{D_{i}}) =V_{i}\sum_{j\in \mathcal{N}}V_{j}^{*}Y_{ij}^* & \textbf{[C3]}\\
		&|V_i^{min}|\leq |V_{i}| \leq |V_i^{max}| & \textbf{[C4]} \\
		&\phi_i^{min}\leq \phi_{i} \leq \phi_i^{max} & \textbf{[C5]}\\
		&{|V_{i}(V_{i}^{*}-V_{j}^{*})Y_{ij}^*|\leq S_{i,j}^{max} \ \forall \  j \in \mathcal{N},j\ne i }& \textbf{[C6]}
	\end{align*}
}
Due to the non-convexity imposed by the constraints [C3], [C4] and [C6], ACOPF problem is NP-hard \cite{Lavaei2011}. Many of the existing solvers are based on Interior Point Method \cite{Zimmerman2010,Zamzam2020,Pan2021},  leveraging the Lagragian and its Karush-Kuhn-Tucker (KKT) conditions \cite{Wang2022} to iteratively solve approximations of the original problem. The LMPs are derived from the Lagrangian multipliers of the ACOPF. These will be elaborated in the next section. 

\subsection{Locational Marginal Prices}
The Lagrangian of $\mathcal{P}_{ACOPF}$ is defined as follows in Eq.~\ref{eq:lag}, where $C_i$ is the cost function associated with generator $i \in \mathcal{G}$, $\nu_i^+$, $\nu_i^-$, $\zeta_i^+$ and $\zeta_i^-$ are dual variables of constraint [C1] and [C2], $\lambda_{i}$ and $\mu_i$ serves as dual variables in the real and imaginary components for [C3], the dual variables of [C4] - [C6] are $\eta_i^+$, $\eta_i^-$, $\gamma_i^+ $, $\gamma_i^-$ and $\xi_{ij}$, respectively. Furthermore, $\lambda_{i}$ is used as the LMP at each node $i$ in the power market.

{\footnotesize
\begin{equation} \label{eq:lag}
	\begin{split}
		\mathcal{L} = & \sum_{i \in \mathcal{G}} C_i(P_{G_{i}}) +\sum_{i \in \mathcal{G}} [\nu_i^+ \left(P_{G_{i}} - {P_{G_{i}}^{\max}} \right) + \nu_i^- \left({P_{G_{i}}^{\min}} - P_{G_{i}} \right)\\
		&+\zeta_i^+ \left(Q_{G_{i}} - {Q_{G_{i}}^{\max}} \right)+ \zeta_i^- \left({Q_{G_{i}}^{\min}} - Q_{G_{i}} \right)]\\
		&\sum_{i \in \mathcal{N}} [\lambda_i \left(P_{D_{i}}-P_{G_{i}} + \sum_{j} Re(V_{i}\sum_{j\in \mathcal{N}}V_{j}^{*}Y_{ij}^*) \right) \\
		&+ \mu_i \left(Q_{D_{i}} - Q_{G_{i}} + \sum_{j}Im(V_{i}\sum_{j\in \mathcal{N}}V_{j}^{*}Y_{ij}^*) \right) \\
		&+ \eta_i^+ \left(|V_{i}| - |V_{i}^{\max}| \right)+\eta_i^- \left(|V_{i}^{\min}|  - |V_{i}| \right)\\
		&+ \gamma_i^+ \left(\phi_{i} - \phi_{i}^{\max} \right)+\gamma_i^- \left(\phi_{i}^{\min}  - \phi_{i} \right)]\\
		&+ \sum_{ij \in \mathcal{E}} \xi_{ij} \left(|V_{i}(V_{i}^{*}-V_{j}^{*})Y_{ij}^*| - S_{ij}^{\max} \right)
	\end{split}
\end{equation}}

The Karush-Kuhn-Tucker (KKT) conditions are both necessary and sufficient for the optimality of convex optimization problems, but only necessary for the optimality of ACOPF due to its non-convexity. These conditions are given by:

\begin{align*}
	& \text{(1) Stationarity: }  && \nabla \mathcal{L} = 0 \text{ w.r.t all variables}, \\
	& \text{(2) Primal feasibility: }  && [C1]-[C6] \text{  adhered },  \\
	& \text{(3) Dual feasibility: }  && \text{non-negative dual variables} \\
	& &&\text{of inequality constraints, }\\
	& \text{(4) Complementary slackness: }  && \text{all inequality constraint are }  \\
	&&& \text{either active or with zero dual}\\
\end{align*}
Based on the Stationary condition, the gradient $\frac{\partial \mathcal{L} }{\partial P_{i}^{G}}=0, \forall i \in \mathcal{G}$ when $P_{i}^{G}$ is the optimal solution. Moreover, in Complementary slackness, if a constraint is inactive, the associated dual variable must be zero.

\subsection{Assumptions in This Paper}

\begin{enumerate}
	\item \text{Scenario 1: } This scenario has the similar assumptions as in \cite{Donti2018}, where all LMPs and power generations are given, capacity constraints of each generator may be binding, and other constraints are not active.  

	\item \text{Scenario 2: } ISOs may only publicize aggregated generations and LMPs. In this scenario, $\lambda_{i}$ at bus $i$ is coupling of $a_g$ and $\lambda_{g}$ all other buses $\forall g\in\mathcal{G}$, which is highly nonlinear. We assume no binding constraint exist in the dataset.
\end{enumerate}

We will also show in Sec.~\ref{sec:method} that the coefficients $a_{i}$ and $b_{i}$ as well as the capacity information can be inferred by adversaries in Scenario 1, and in Scenario 2 the quadratic coefficient $a_i$ can be recovered.

\vspace{5pt}
\section{Proposed Algorithm}\label{sec:method}
In this section, we propose a closed-form solution for inferring cost coefficients in Scenario 1 and introduce two cases in which the maximum or minimum generation capacity of generators can be estimated. Furthermore, we derive a novel Fixed-Point Iteration Method for recovering quadratic coefficients of cost functions in Scenario 2, where only LMPs and aggregated power generation data are available to adversaries.

\subsection{Attack Mechanism Using LMPs and Individual Generations}\label{sec:3A}
In this scenario, we assume adversaries are capable of collecting a historical operational dataset in the power grid containing pairs of LMP and power generation at each generator bus. Based on the Stationary condition, these pairs satisfy Eq.~\ref{eq:linear}:\\
\begin{equation}\label{eq:linear}
	2a_{i}P_{i}^{G}+b_{i} - \lambda_i + \nu_i^+ - \nu_i^- = 0
\end{equation}

Additionally, following from the Complementary Slackness condition, where the dual variables $\nu_i^+$ and $\nu_i^-$ are both zero when ${P_{G_i}^{\min}}<P_{G_i}<{P_{G_i}^{\max}}$, constraint [C1] in $\mathcal{P}_{ACOPF}$ can be deactivated by excluding data points corresponding to the maximum and minimum values of  $P_{G_i}$. Then, the Stationary condition will be as follows in Eq.~\ref{eq:linear2}, where $\nu_i^+$ and $\nu_i^-$ are both zero. Moreover, this condition holds locally at each generator bus $i\in\mathcal{G}$, so only using local data at bus $i$ is sufficient for inferring the cost coefficients $a_i$ and $b_i$.\\
\begin{equation}\label{eq:linear2}
	2a_{i}P_{G_{i}}+b_{i} - \lambda_i = 0
\end{equation}
Since it is a two dimensional linear problem, only two pairs of $P_{G_{i}}$ and $ \lambda_{i}$ are required to fully recover $a_{i}$ and $b_{i}$. Let's define the two points as $\alpha=\{P_{G_{i}}^\alpha,\lambda_{i}^\alpha\}$ and $\beta=\{P_{G_{i}}^\beta,\lambda_{i}^\beta\}$, then the solution is derived in Eq.~\ref{eq:lrsolu}.\\
\begin{equation}\label{eq:lrsolu}
 \left\{\begin{matrix}
	a_i = \frac{\lambda_{i}^\alpha-\lambda_{i}^\beta}{2(P_{G_{i}}^\alpha-P_{G_{i}}^\beta)}\\
	b_i = \lambda_i^{\alpha}-\frac{P_{G_{i}}^\alpha(\lambda_{i}^\alpha-\lambda_{i}^\beta)}{P_{G_{i}}^\alpha-P_{G_{i}}^\beta}\\
\end{matrix}\right.
\end{equation}

After the cost coefficients are reconstructed, attackers can examine if the the capacity of the target generator can be inferred. This can be achieved by testing whether the extreme values in the dataset along with their LMPs satisfy the linear relationship defined in Eq.~\ref{eq:linear2}. When $P_{i}^{G}$ reaches the bound, either $\nu_i^+$ or $\nu_i^-$ is positive. Therefore, power generation and LMPs will adhere to Eq.~\ref{eq:linear} instead of Eq.~\ref{eq:linear2}, and the two cases below define the information adversaries can determine:
\begin{enumerate}
	\item Case 1: When $P_{G_{i}}<\frac{\lambda_{i}-b_{i}}{2a_{i}}$, it reveals $\nu_i^+>0$ and $\nu_i^-=0$. Then, the upper bound capacity of the generator $i$ is $P_{G_{i}}^{max}=P_{G_{i}}$.
	\item Case 2: Conversely, $\nu_i^->0$ and $\nu_i^+=0$ when $P_{G_{i}}>\frac{\lambda_{i}-b_{i}}{2a_{i}}$, so the lower bound capacity is $P_{G_{i}}^{min}=P_{G_{i}}$. 
\end{enumerate}

\subsection{Attack Algorithm Exploiting LMPs and Aggregated Power Generations}
The real-time aggregated power generation data are often publicized from particular resources ( e.g., hydropower, natual gas, nuclear data in CAISO \cite{Caiso}). The aggregate generation denoted as $P_{G_{A}}=\sum_{g \in \mathcal{G}} P_{G_{g}}$ is known, while each single $P_{G_{g}}$ remains private. This scenario is designed to prevent the leakage of critical information about the power grid in Scenario 1, e.g. historical bidding data that can be leveraged for reverse engineering \cite{Donti2018}. However, by combining LMPs with aggregated power generation data, a Multivariate Fixed Point Iteration (MFPI) method is proposed in this section to infer the quadratic coefficient of each generator's cost function. It is theoretically proven that once vulnerable data points are located, the MFPI algorithm is guaranteed to converge to the unique solution.

We select a base data point $\beta_i$ and an auxiliary data point $\alpha_i$ for each generator $i\in\mathcal{G}$ denoted as $\beta_i=\{P_{G_{A}}^{(\beta_i)}, \lambda_{1}^{(\beta_i)}, \lambda_{2}^{(\beta_i)}, ..., \lambda_{|G|}^{(\beta_i)}\}$ and $\alpha_i=\{P_{G_{A}}^{(\alpha_i)}, \lambda_{1}^{(\alpha_i)}, \lambda_{2}^{(\alpha_i)}, ..., \lambda_{|G|}^{(\alpha_i)}\}$, containing the aggregated power generation in the power grid and LMPs at all generator buses. These two points satisfy the following Eqs.~\ref{eq:x0}:\\
\begin{equation}
	\left\{\begin{matrix}
		2a_{i}(P_{G_{A}}^{(\beta_{i})}-\sum_{g \in \mathcal{G}, g\ne i}\frac{\lambda_{g}^{(\beta_{i})}-b_{g}}{2a_{g}})&+b_{i} - \lambda_{i}^{(\beta_{i})} = 0\\
		2a_{i}(P_{G_{A}}^{(\alpha_{i})}-\sum_{g \in \mathcal{G}, g\ne i}\frac{\lambda_{g}^{(\alpha_{i})}-b_{g}}{2a_{g}})&+b_{i} - \lambda_{i}^{(\alpha_{i})} = 0
	\end{matrix}\right.	
	\label{eq:x0}
\end{equation}
Noting that $\frac{\lambda_{g}^{(\beta_{i})}-b_{g}}{2a_{g}}$ and $\frac{\lambda_{g}^{(\alpha_{i})}-b_{g}}{2a_{g}}$ in the equations are the active power generations $P_{G_{g}}^{(\beta_{i})}$ and $P_{G_{g}}^{(\alpha_{i})}$ at bus $g\in\mathcal{G},g\ne i$. Thus, $P_{G_{A}}^{(\beta_{i})}-\sum_{g \in \mathcal{G}, g\ne i}\frac{\lambda_{g}^{(\beta_{i})}-b_{g}}{2a_{g}}$ and $P_{G_{A}}^{(\alpha_{i})}-\sum_{g \in \mathcal{G}, g\ne i}\frac{\lambda_{g}^{(\alpha_{i})}-b_{g}}{2a_{g}}$ are active power generation at bus $i$ for the two points. Using auxiliary point as an independent equation to substitute and eliminate $b_i$ in the base point, Eqs.~\ref{eq:x0} can be rewritten in Eq.~\ref{eq:ag}:\\
\begin{equation}\label{eq:ag}
	2a_{i}(P_{G_{A}}^{(\beta_i)}-P_{G_{A}}^{(\alpha_i)}+\sum_{g \in \mathcal{G}, g\ne i}\frac{\lambda_{g}^{(\alpha_i)}-\lambda_{g}^{(\beta_i)}}{2a_{g}})+\lambda_{i}^{(\alpha_i)}- \lambda_{i}^{(\beta_i)}=0
\end{equation}
By rearranging Eq.~\ref{eq:ag}, the expression of $a_i$ can be derived in the form of Eq.~\ref{eq:eq8}. Clearly, $a_{i}$ is a function of all other second order coefficients $a_g,g\in \mathcal{G},g\ne i$. \\
\begin{equation}\label{eq:eq8}
	a_{i}=\frac{\lambda_{i}^{(\beta_i)}- \lambda_{i}^{(\alpha_i)}}{2(P_{G_{A}}^{(\beta_i)}-P_{G_{A}}^{(\alpha_i)}+\sum_{g \in \mathcal{G}, g\ne i}\frac{\lambda_{g}^{(\alpha_i)}-\lambda_{g}^{(\beta_i)}}{2a_{g}})}
\end{equation}
Applying this to all generators in the system, the computation of $a_i$ for all generators can be generalized into a system of equations, where the entire set of coefficients can be expressed as a MFPI defined in Eq.~\ref{eq:fa}:
\begin{equation}
	a=\begin{pmatrix}
		a_{1}\\
		a_{2}\\
		\vdots \\
		a_{|\mathcal{G}|}
	\end{pmatrix}= \begin{pmatrix}
		F_{1}(a_{1}, a_{2}, ..., a_{|\mathcal{G}|})\\
		F_{2}(a_{1}, a_{2}, ..., a_{|\mathcal{G}|})\\
		\vdots \\
		F_{|\mathcal{G}|}(a_{1}, a_{2}, ..., a_{|\mathcal{G}|}))
	\end{pmatrix}=F(a)
	\label{eq:fa}
\end{equation}

Both of the existence and uniqueness of MFPI are guaranteed by Banach's fixed point theorem\cite{Su2016} when the following condition holds:\\
\begin{equation}
	d(F(a),F(\hat{a})) \le L d(a,\hat{a}), \forall a, \hat{a} \in \mathcal{A}
\end{equation}    
where $F:\mathcal{A}\to \mathcal{A}$ is a self-map of a metric space $(\mathcal{A}, d)$, $L \in (0,1)$ is the Lipschitz constant, and $a=(a_{1}, a_{2}, ..., a_{|\mathcal{G}|})\in a^{|\mathcal{G}|}$, $\hat{a}=(\hat{a}_{1}, \hat{a}_{2}, ..., \hat{a}_{|\mathcal{G}|})\in a^{|\mathcal{G}|}$. 

We will show that the contractive mapping of $F(a)$ is satisfied by the lower bound of the norm of its Jacobian matrix in Theorem~\ref{tr:convergence}. The Jacobian matrix of $F(a)$ can be defined in Eq.~\ref{eq:jac}:\\
\begin{equation}\label{eq:jac}
	F'(a) = \left[ \begin{array}{ccccc}
		\frac{\partial F_1}{\partial a_1} & \frac{\partial F_1}{\partial a_2} & \cdots & \frac{\partial F_1}{\partial a_{|\mathcal{G}|}} \\
		\frac{\partial F_2}{\partial a_1} & \frac{\partial F_2}{\partial a_2} & \cdots & \frac{\partial F_2}{\partial a_{|\mathcal{G}|}}  \\
		\vdots & \ddots & \ddots & \vdots \\
		\vdots & \ddots & \ddots & \vdots \\
		\frac{\partial F_{|\mathcal{G}|}}{\partial a_1} & \frac{\partial F_{|\mathcal{G}|}}{\partial a_2} & \cdots & \frac{\partial F_{|\mathcal{G}|}}{\partial a_{|\mathcal{G}|}}
	\end{array} \right]. 
\end{equation}
where  all diagonal terms are zero, and off diagonal term can be calculated as:\\
\begin{equation}
	\frac{\partial F_{i}}{\partial a_{g}}= \frac{(\lambda_{i}^{(\beta_{i})}-\lambda_{i}^{(\alpha_{i})})(\lambda_{g}^{(\alpha_{i})}-\lambda_{g}^{(\beta_{i})})}{4a_{g}^{2}(P_{G_{A}}^{(\beta_{i})}-P_{G_{A}}^{(\alpha_{i})}+\sum_{\hat{g} \in \mathcal{G}, \hat{g}\ne i}\frac{\lambda_{\hat{g}}^{(\alpha_{i})}-\lambda_{\hat{g}}^{(\beta_{i})}}{2a_{\hat{g}}})^2}\\
\end{equation}

\begin{theorem}\label{tr:convergence}
	The MFPI attack for recovering cost coefficients via LMPs and aggregated power generations is guaranteed to converge to an unique solution when the following conditions hold:
	\begin{enumerate}
		\item $\lambda_{i}^{(\beta_i)}- \lambda_{i}^{(\alpha_i)}>0, \forall i\in \mathcal{G}$
		\item $P_{G_{A}}^{(\beta_i)}-P_{G_{A}}^{(\alpha_i)}>0,\forall i \in \mathcal{G}$
		\item $a^{max} \ge F_i(a)\ge a^{min}>0, \forall i \in\mathcal{G}$
	\end{enumerate}
where $a_{\min}$ is defined in Eq.~\ref{eq:bound}, and $a_{\max}$ is a predefined upper bound for $a_i$.
	
\end{theorem}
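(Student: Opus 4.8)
The plan is to apply Banach's fixed point theorem to the map $F$ of Eq.~\ref{eq:fa} on the box $\mathcal{A}=[a^{\min},a^{\max}]^{|\mathcal{G}|}$ equipped with the $\ell_\infty$ metric $d(a,\hat a)=\max_i|a_i-\hat a_i|$. Two things must be checked: that $\mathcal{A}$ is a complete metric space with $F(\mathcal{A})\subseteq\mathcal{A}$, and that $F$ is a contraction with Lipschitz constant $L\in(0,1)$. Completeness is immediate because $\mathcal{A}$ is closed and bounded in $\mathbb{R}^{|\mathcal{G}|}$, and the self-map property $F(\mathcal{A})\subseteq\mathcal{A}$ is exactly condition~3 of the statement. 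Condition~2, $P_{G_A}^{(\beta_i)}-P_{G_A}^{(\alpha_i)}>0$, together with condition~1 is used to certify that the denominator in Eq.~\ref{eq:eq8} does not vanish anywhere on $\mathcal{A}$, so that $F$ is continuously differentiable there and the sign of each $F_i$ is consistent with the positivity asserted in condition~3. Once the two hypotheses hold, Banach's theorem yields a unique fixed point $a^\star\in\mathcal{A}$ to which the iteration $a^{(k+1)}=F(a^{(k)})$ converges from any start in $\mathcal{A}$; since the ground-truth coefficient vector satisfies Eqs.~\ref{eq:x0} it is a fixed point of $F$, and uniqueness forces $a^\star$ to be exactly those coefficients.

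The crux is the contraction estimate. I would invoke the mean value inequality on the convex set $\mathcal{A}$: $d(F(a),F(\hat a))\le\big(\sup_{a\in\mathcal{A}}\|F'(a)\|_\infty\big)d(a,\hat a)$, where $\|F'(a)\|_\infty$ is the maximum absolute row sum of the Jacobian of Eq.~\ref{eq:jac}. The diagonal entries vanish, so for row $i$ I must bound $\sum_{g\ne i}|\partial F_i/\partial a_g|$. Using the identity $D_i:=P_{G_A}^{(\beta_i)}-P_{G_A}^{(\alpha_i)}+\sum_{g\ne i}\tfrac{\lambda_g^{(\alpha_i)}-\lambda_g^{(\beta_i)}}{2a_g}=(\lambda_i^{(\beta_i)}-\lambda_i^{(\alpha_i)})/(2F_i(a))$, the off-diagonal entry simplifies to $\partial F_i/\partial a_g = F_i(a)^2(\lambda_g^{(\alpha_i)}-\lambda_g^{(\beta_i)})/\big(a_g^2(\lambda_i^{(\beta_i)}-\lambda_i^{(\alpha_i)})\big)$. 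Then condition~1 makes $\lambda_i^{(\beta_i)}-\lambda_i^{(\alpha_i)}>0$, condition~3 gives $F_i(a)\le a^{\max}$, and $a_g\ge a^{\min}$, so $|\partial F_i/\partial a_g|\le\tfrac{(a^{\max})^2}{(a^{\min})^2}\cdot\tfrac{|\lambda_g^{(\alpha_i)}-\lambda_g^{(\beta_i)}|}{\lambda_i^{(\beta_i)}-\lambda_i^{(\alpha_i)}}$ and the $i$-th row sum is at most $\tfrac{(a^{\max})^2}{(a^{\min})^2}\cdot\tfrac{\sum_{g\ne i}|\lambda_g^{(\alpha_i)}-\lambda_g^{(\beta_i)}|}{\lambda_i^{(\beta_i)}-\lambda_i^{(\alpha_i)}}$. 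I would then define $a^{\min}$ in Eq.~\ref{eq:bound} as the smallest constant for which this bound stays strictly below $1$ for every $i\in\mathcal{G}$ — this is the explicit origin of the lower bound appearing in condition~3 — and take $L<1$ to be the resulting maximum row sum.

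I expect the main obstacle to be obtaining a uniform lower bound on $|D_i|$ over the entire box $\mathcal{A}$, since it is exactly a near-zero denominator that would blow up the Jacobian and defeat the contraction. The key observation that unlocks this is that $D_i$ is not a free quantity but is tied to $F_i$ through $D_i=(\lambda_i^{(\beta_i)}-\lambda_i^{(\alpha_i)})/(2F_i(a))$, so the self-map ceiling $F_i(a)\le a^{\max}$ transfers directly into $|D_i|\ge(\lambda_i^{(\beta_i)}-\lambda_i^{(\alpha_i)})/(2a^{\max})>0$; this interdependence is why the three conditions of the theorem are coupled rather than independent. A secondary subtlety worth flagging is that $F_i$ does not depend on its own argument $a_i$, so the $\ell_\infty$ contraction genuinely relies on the off-diagonal row sums and the mean value step must be applied to the vector-valued map on the convex box, not coordinate by coordinate.
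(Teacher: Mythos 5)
Your proposal is correct and follows the same overall architecture as the paper's proof --- Banach's fixed point theorem on the box $[a^{\min},a^{\max}]^{|\mathcal{G}|}$, the mean value bound $d(F(a),F(\hat a))\le \sup\|F'\|_\infty\, d(a,\hat a)$, and the observation that the Jacobian has zero diagonal so only the off-diagonal row sums matter --- but the key contraction estimate is carried out by a genuinely different route. The paper bounds the denominator $D_i=P_{G_{A}}^{(\beta_i)}-P_{G_{A}}^{(\alpha_i)}+\sum_{g\ne i}\frac{\lambda_{g}^{(\alpha_i)}-\lambda_{g}^{(\beta_i)}}{2a_{g}}$ directly over the box by splitting the sum into indices with positive and negative LMP differences ($\mathcal{G}^{+}$ and $\mathcal{G}^{-}$) and substituting $a^{\max}$ and $a^{\min}$ respectively, then solves the resulting quadratic inequality in $a^{\min}$ to obtain the explicit threshold of Eq.~\ref{eq:bound}. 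You instead eliminate $D_i$ via the identity $D_i=(\lambda_{i}^{(\beta_i)}-\lambda_{i}^{(\alpha_i)})/(2F_i(a))$ and invoke condition~3 to get $|D_i|\ge(\lambda_{i}^{(\beta_i)}-\lambda_{i}^{(\alpha_i)})/(2a^{\max})$, which yields the cleaner row-sum bound $\frac{(a^{\max})^2}{(a^{\min})^2}\cdot\frac{\sum_{g\ne i}|\lambda_{g}^{(\alpha_i)}-\lambda_{g}^{(\beta_i)}|}{\lambda_{i}^{(\beta_i)}-\lambda_{i}^{(\alpha_i)}}$ and hence a different closed form for $a^{\min}$ than the one actually printed in Eq.~\ref{eq:bound}. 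Your form is more elegant but strictly more demanding: since $a^{\max}\ge a^{\min}$, requiring your row sum to stay below one forces the diagonal-dominance condition $\sum_{g\ne i}|\lambda_{g}^{(\alpha_i)}-\lambda_{g}^{(\beta_i)}|<\lambda_{i}^{(\beta_i)}-\lambda_{i}^{(\alpha_i)}$ on the selected data points, whereas the paper's bound retains the (typically large) term $P_{G_{A}}^{(\beta_i)}-P_{G_{A}}^{(\alpha_i)}$ in the denominator and can therefore certify contraction on data that your condition would reject. On the other hand, your write-up is more careful on two points the paper glosses over: you use the mean value \emph{inequality} for the vector-valued map (the paper's single-$\xi$ equality $F(a)-F(\hat a)=F'(\xi)(a-\hat a)$ is not literally valid for vector-valued $F$; each component needs its own intermediate point), and you state explicitly why the unique fixed point must coincide with the ground-truth coefficients, namely that the true coefficient vector satisfies Eqs.~\ref{eq:x0} and hence is itself a fixed point lying in $\mathcal{A}$.
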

\begin{proof}
In the expression of $a_i$ defined in Eq.~\ref{eq:eq8}, the dominant part of the denominator is $P_{G_{A}}^{(\beta_i)}-P_{G_{A}}^{(\alpha_i)}$, which is the difference of aggregated power generations. Due to to the positivity of the coefficient $a_i$ and the need for deriving bound in Condition 3, both $\lambda_{i}^{(\beta_i)}- \lambda_{i}^{(\alpha_i)}$ and $P_{G_{A}}^{(\beta_i)}-P_{G_{A}}^{(\alpha_i)}$ have to be positive.
	
Furthermore, to prove the convergence of MFPI, we need to show that 
	\begin{equation*}
		\frac{|F(a)-F(\hat{a})|}{|(a-\hat{a})|}\le 1
	\end{equation*}
	Based on the Mean Value Theorem \cite{Flett1958}, there exists at least one data point $\xi$ between $a$ and $\hat{a}$ satisfying:\\
	\begin{equation*}
		F(a)-F(\hat{a})=F'(\xi)(a-\hat{a})
	\end{equation*}
	Taking the matrix norm on both sides and applying the Sub-multiplicative property, the following inequality can be established:\\
	\begin{equation*}
		\frac{|F(a)-F(\hat{a})|}{|(a-\hat{a})|}	\le|F'(\xi)|
	\end{equation*}
	Instead of explicitly searching for the data point $\xi$, if the norm of all possible $\xi$ in the Jacobian matrix $F'(a)$ are bounded by $L \in (0,1)$, the convergence of the proposed fixed point iteration is guaranteed. Due to computational efficiency, $L_\infty$ is adopted. The lower bound $a_{\min}$ in Condition 3 is derived in Eq.\ref{eq:bound}, where Conditions 1 and 2 are applied, and the summation of the fraction term in the denominator is divided into positive and negative case for inducing the upper bound of the norm. To bound the maximum value of the norm, $a_{\min}$ is derived. Additionally, the self-mapping $F(a)\in \mathcal{A}$ has to be guaranteed in Condition 3.
	
\begin{figure*}[t]  % 't' ensures it appears at the top of the page
	\begin{equation}\label{eq:bound}
		\begin{split}
			&L_{\infty}(F'(a))=\max_{i} \sum_{g\in\mathcal{G}}|\frac{\partial F_{i}}{\partial a_{g}}|\\
			&=\max_{i}\frac{|\lambda_{i}^{(\beta_{i})}-\lambda_{i}^{(\alpha_{i})}|}{4(P_{G_{A}}^{(\beta_{i})}-P_{G_{A}}^{(\alpha_{i})}+\sum_{\hat{g} \in \mathcal{G}, \hat{g}\ne i}\frac{\lambda_{\hat{g}}^{(\alpha_{i})}-\lambda_{\hat{g}}^{(\beta_{i})}}{2a_{g}})^2}\sum_{g \in \mathcal{G}, g\ne i} \frac{|\lambda_{g}^{(\alpha_{i})}-\lambda_{g}^{(\beta_{i})}|}{a_{g}^{2}}\\
			&\leq \max_{i}\frac{\lambda_{i}^{(\beta_{i})}-\lambda_{i}^{(\alpha_{i})}}{4(P_{G_{A}}^{(\beta_{i})}-P_{G_{A}}^{(\alpha_{i})}+\sum_{\hat{g} \in \mathcal{G}^{+}, \hat{g}\ne i}\frac{\lambda_{\hat{g}}^{(\alpha_{i})}-\lambda_{\hat{g}}^{(\beta_{i})}}{2a^{max}}+\sum_{\hat{g} \in \mathcal{G}^{-}, \hat{g}\ne i}\frac{\lambda_{\hat{g}}^{(\alpha_{i})}-\lambda_{\hat{g}}^{(\beta_{i})}}{2a^{min}})^2}\sum_{g \in \mathcal{G}, g\ne i} \frac{|\lambda_{g}^{(\alpha_{i})}-\lambda_{g}^{(\beta_{i})}|}{a_{g}^{2}}\\
			&\leq \max_{i}\frac{(\lambda_{i}^{(\beta_{i})}-\lambda_{i}^{(\alpha_{i})})\sum_{g \in \mathcal{G}, g\ne i}|\lambda_{g}^{(\alpha_{i})}-\lambda_{g}^{(\beta_{i})}|}{4(P_{G_{A}}^{(\beta_{i})}-P_{G_{A}}^{(\alpha_{i})}+\sum_{\hat{g} \in \mathcal{G}^{+}, \hat{g}\ne i}\frac{\lambda_{\hat{g}}^{(\alpha_{i})}-\lambda_{\hat{g}}^{(\beta_{i})}}{2a^{max}}+\sum_{\hat{g} \in \mathcal{G}^{-}, \hat{g}\ne i}\frac{\lambda_{\hat{g}}^{(\alpha_{i})}-\lambda_{\hat{g}}^{(\beta_{i})}}{2a^{min}})^2} \frac{1}{{a^{min}}^{2}}<1\\
			&\therefore a^{min}\ge\max_{i}\frac{\sqrt{(\lambda_{i}^{(\beta_{i})}-\lambda_{i}^{(\alpha_{i})})\sum_{g \in \mathcal{G}, g\ne i}|\lambda_{g}^{(\alpha_{i})}-\lambda_{g}^{(\beta_{i})}|}-\sum_{\hat{g} \in \mathcal{G}^{-}, \hat{g}\ne i}(\lambda_{\hat{g}}^{(\alpha_{i})}-\lambda_{\hat{g}}^{(\beta_{i})})}{2(P_{G_{A}}^{(\beta_{i})}-P_{G_{A}}^{(\alpha_{i})})+\sum_{\hat{g} \in \mathcal{G}^{+}, \hat{g}\ne i}\frac{\lambda_{\hat{g}}^{(\alpha_{i})}-\lambda_{\hat{g}}^{(\beta_{i})}}{a^{max}}}
		\end{split}
	\end{equation}
\end{figure*}
	
\end{proof}
\vspace{5pt}
\section{Experiment}\label{sec:ex}

In this section, we will first detail the environment used in this research, including the software, libraries, computation platform, etc. Then, we will illustrate the numerical results of the proposed method.

\subsection{Environment and Dataset}
In this work, the experiment ran on a local laptop equipped with an Intel Core i7-10750H CPU and 32 GB of RAM. Python was used to implement the algorithm. To assess the performance of our proposed algorithm, we conducted experiments on three standard benchmark systems: IEEE 14-bus, IEEE 30-bus and IEEE 118-bus systems. The ACOPF dataset was generated by MATPOWER's primal-dual interior point solver (MIPS)\cite{Zimmerman2010}. 

For each of these benchmark systems, a dataset comprising 10K pairs of input and optimal output data was generated. We introduced variability by randomly perturbing the nodal active and reactive power demands. This perturbation ranged from 80\% to 120\% of their respective nominal values, thereby simulating real-world fluctuations in power demand. This is the common practice in the literature  \cite{Liu2022,Pan2021,Wang2024}. 

There are five generators in the IEEE 14-bus system, one of which is offline for most of the time (84.9\%), and another has 271 data points at its minimum capacity. None of the generators reaches full capacity. This light-loaded scenario gives 1,509 points inactive to constraints. In IEEE 30-bus grid, there are 6 generators all with quadratic cost functions. In the dataset, 2 generators with 3,365 data points are at their minimum capacity, 4 generators with 3,997 points reach the full capacity. Among the 10,000 data points, 6,546 are not binding to any constraints associated with active power generation. \textcolor{black}{In IEEE 118-bus system, there are 54 generators, none of which never reach the upper limits, while 15 are almost always offline. All data points have at least one binding constraint.} The cost function in Matpower is used as the ground truth in this experiment, where the quadratic coefficients range from $0.01$ to $0.25$ in two systems.

\subsection{Attack Performance on Scenario 1} 
We use power generation points that are more than 0.1 p.u. away from both the maximum and minimum values in the dataset and randomly select two points with a generation difference of at least 0.01 p.u. to avoid instability. 

Running Eq.~\ref{eq:lrsolu} results in $3.55e^{-4}$ and $3.03e^{-4}$ MSE errors for IEEE 14-bus system on the coefficient $a_i$ and $b_i, i\in\mathcal{G}$, respectively. For IEEE 30-bus system, the recovery error for $a_i$ and $b_i, i\in\mathcal{G}$ are as low as $3.64e^{-12}$ and $1.85e^{-8}$, respectively. \textcolor{black}{These average error on IEEE 118-bus system are $1.32e^{-3}$ and $0.012$.}

Because no point reaches the upper capacity limits in the dataset of IEEE 14-bus system, even after adversaries remove points associated with highest power generation, the algorithm can only infer the lower limits for two of the generators with points adhering Case 2 in Sec.~\ref{sec:3A}. The MSE error is at $1.48e^{-4}$ in average. For 30-bus system, the maximum capacity of four generators out of six can be recovered, where running the algorithm with the points satisfying Case 1 in Sec.~\ref{sec:3A} reach $0.106$ MSE error in average. For the minimum power generation, two of generators data in Case 2 results in $0.052$ MSE error in average. Since no data points reach the maximum generation limits in the IEEE 118-bus system dataset, the algorithm can only infer the lower limits. The lower limits of 24 generators are reconstructed with an MSE error of  $9.56e^{-5}$.

\subsection{Attack Performance on Scenario 2} 
In this section, we conduct an experiment to verify the convergence behaviour of the proposed MFPI algorithm. Since MFPI convergence is guaranteed when the conditions are satisfied, our primary objective is to identify points that meet these conditions and validate their impact on the algorithm’s performance. 

To successfully use MFPI algorithm recovering cost coefficients with LMPs and aggregated power generations, $2|\mathcal{G}|$ points denoted as $\beta_{i}$ and $\alpha_{i}, i\in\mathcal{G}$ satisfying 3 conditions in Theorem.~\ref{tr:convergence} need to be collected. We search across the dataset to verify conditions for random paris of $\beta_{i}$ and $\alpha_{i}$ until all $2|\mathcal{G}|$ points are found. In Theorem.~\ref{tr:convergence}, conditions 1 and 2 can be verified easily in searching of vulnerable points in the dataset.  The maximum value $a_{\max}$ in condition 3, which is used to derived lower bound in Eq.~\ref{eq:bound} in the worst case, is chosen to be 0.5 in the experiment. Condition 3 is not easy to access, because $a_i$ is unknown. To increase the probability for the hold of this condition, the searching algorithm run through the dataset to seek the lowest bound $a_{\min}$. 

We show the MFPI's performance on not only $\pm 20\%$ random perturbations of nominal load as inputs to ACOPF, but also the $\pm 10\%$ and $\pm 50\%$ perturbations in Tab.~\ref{tab:sample}. For both 14-bus and 30-bus systems in all sample ranges, the algorithm converges in 10 iterations with error parameter errors below $1e^-6$.

\begin{table}[htbp]
	\caption{Coefficient Inference Performance}
	\centering
	\begin{tabular}{cccccc}
		\toprule
		\text{Grid} & \text{Sample Range} & \#Iter & $mse_a$ \\
		\midrule
		\multirow{3}{*}{IEEE 14-Bus} & 10\% & 10 & $2.57e^{-7}$ \\
		& 20\% & 7 & $8.29e^{-7}$ \\
		& 50\% & 10 & $5.37e^{-8}$ \\
		\midrule
		\multirow{3}{*}{IEEE 30-Bus} & 10\% & 9 & $5.56e^{-12}$ \\
		& 20\% & 9 & $5.78e^{-10}$ \\
		& 50\% & 8 & $7.02e^{-12}$ \\
		\bottomrule
	\end{tabular}
	\label{tab:sample}
\end{table}

The convergence process of MFPI on IEEE 14-bus with 4 active generators is shown in Fig.~\ref{fig:mfpi}, where the x-axis represents the number of iterations, and y-axis is the difference between predicted values and the ground truth. The errors on all generators decrease nearly zero after the third iteration. 
\begin{figure}[tb]
	\centerline{\includegraphics[scale=0.7]{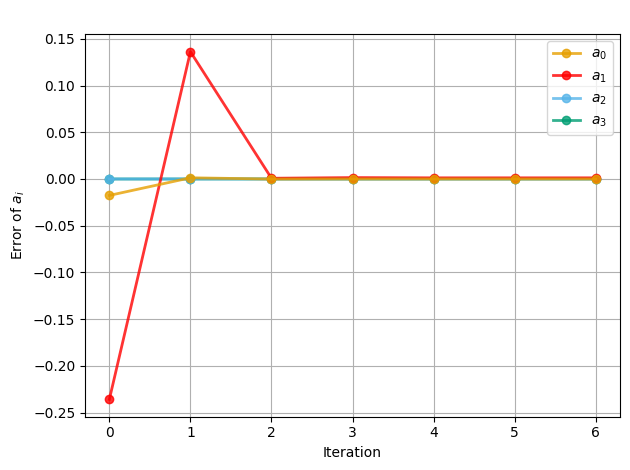}}
	\caption{Error of MFPI Algorithm on IEEE 14-bus.}
	\label{fig:mfpi}
\end{figure}

\vspace{15pt}
\section{Conclusion and Future Work}\label{sec:conc}
In this paper, two attack mechanisms leveraging LMPs and power generation data are proposed to infer the coefficients of cost functions in power grids. In the first scenario, a closed-form solution is proposed, while an MFPI algorithm is introduced with theoretical guarantees in the second scenario. In the future, this work will be extended in three directions: firstly, a stronger attack mechanism for inferring more private information (e.g., linear and fixed coefficients of cost functions in Scenario 2; secondly, research on leveraging the attack results for strategic bidding; and thirdly, deriving more general lower bound conditions to guarantee convergence.

\vspace{15pt}

\end{document}